\def \VersionArXiv {}

\documentclass[letterpaper, 10 pt, conference]{ieeeconf}  %

\IEEEoverridecommandlockouts                              %

\overrideIEEEmargins                                      %

\ifdefined\VersionArXiv
\else
	\makeatletter
	\AtBeginDocument{%
	\@ifpackageloaded{hyperref}
	{\def\@doi#1{\href{https://doi.org/#1}
		{\ttfamily https://doi.org/#1}\egroup}}
	{\def\@doi#1{\ttfamily https://doi.org/#1\egroup}}
	\def\doi{\bgroup\catcode`\_=12\relax\@doi}}
	\makeatother
\fi
\usepackage[ruled,vlined,linesnumbered]{algorithm2e}
	\SetKwInOut{Input}{input}
	\SetKwInOut{Output}{output}
\usepackage[utf8]{inputenc}
\usepackage[english]{babel}
\usepackage{booktabs}
\usepackage{csquotes}

 \usepackage{graphicx}
 \usepackage{amssymb,amsmath}%

\DeclareMathOperator*{\argmin}{arg\,min}

\usepackage[misc,geometry]{ifsym} %

\makeatletter
\def\orcidID#1{\smash{\href{https://orcid.org/#1}{\protect\raisebox{-1.25pt}{\protect\includegraphics[height=1em]{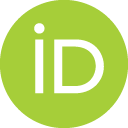}}}}}
\makeatother

\ifdefined\VersionArXiv
	\usepackage[backend=biber,backref=true,style=alphabetic,url=false,doi=true,defernumbers=true,sorting=anyt,maxnames=99]{biblatex} %
	\addbibresource{acc21.bib}

	\renewbibmacro*{doi+eprint+url}{%
		\iftoggle{bbx:doi}
			{\color{black!40}\footnotesize\printfield{doi}}
			{}%
		\newunit\newblock
		\iftoggle{bbx:eprint}
			{\usebibmacro{eprint}}
			{}%
		\newunit\newblock
		\iftoggle{bbx:url}
			{\usebibmacro{url+urldate}}
			{}%
	}

\fi
\DeclareUnicodeCharacter{0301}{\'{e}}

\ifdefined\VersionWithComments
	\usepackage{draftwatermark}
	\SetWatermarkText{draft}
	\SetWatermarkScale{2}
	\SetWatermarkColor[gray]{0.9}
\fi
\usepackage[svgnames,table]{xcolor}
\definecolor{darkblue}{rgb}{0, 0, 0.7}

\usepackage[
		pdfauthor={Jawher Jerray, Laurent Fribourg, Étienne André},%
		pdftitle={Robust optimal period control using guaranteed Euler's method},
		breaklinks  = true,
		colorlinks  = true,
	\ifdefined \VersionWithComments
		pagebackref = true,
	\fi
		citecolor   = blue!50!blue,
		linkcolor   = darkblue,
		urlcolor    = blue!50!green,
	]{hyperref}

\usepackage[capitalise,english,nameinlink]{cleveref} %
\crefname{line}{\text{line}}{\text{lines}} %

\usepackage{tikz}
\usetikzlibrary{decorations.pathmorphing}
\usetikzlibrary{arrows,calc}
\tikzset{
>=stealth',
help lines/.style={dashed, thick},
axis/.style={<->},
important line/.style={thick},
connection/.style={thick, dotted},
}

\ifdefined\VersionWithComments
	\usepackage[colorinlistoftodos,textsize=footnotesize]{todonotes}
\else
	\usepackage[disable]{todonotes}
\fi
\newcommand{\gennote}[3]{\todo[linecolor=#2,backgroundcolor=#2!25,bordercolor=#2]{#3: #1}\xspace}
\newcommand{\ea}[1]{\gennote{#1}{blue}{A}}
\newcommand{\lf}[1]{\gennote{#1}{orange}{Laurent}}

\ifdefined \VersionWithComments
	\newcommand{\todoinline}[1]{\mbox{}{\color{red}{\textbf{TODO}\ifx#1\\\else:\ \fi #1}}} %
\else
	\newcommand{\todoinline}[1]{}
\fi

\usepackage{verbatim} %

\ifdefined \VersionWithComments
	\usepackage{soul}
	
	\newcommand{\reviewDelete}[1]{{\color{red}\st{#1}}}
\else
	
	\newcommand{\reviewDelete}[1]{}
\fi

\usepackage{amsthm}

\theoremstyle{plain}
\newtheorem{lemma}{Lemma}
\newtheorem{proposition}{Proposition}
\newtheorem{theorem}{Theorem}

\theoremstyle{definition}
\newtheorem{definition}{Definition}
\newtheorem{example}{Example}

\theoremstyle{remark}
\newtheorem{remark}{Remark}

\ifdefined \VersionWithComments
 	\definecolor{colorok}{RGB}{80,80,150}
\else
	\definecolor{colorok}{RGB}{0,0,0}
\fi

\newcommand{\eg}{\textcolor{colorok}{e.\,g.,}\xspace}

\newcommand{\ie}{\textcolor{colorok}{i.\,e.,}\xspace}

\RequirePackage{amsmath}

\usepackage{mathabx}

\usepackage{color}

\title{%
Robust optimal periodic control using %
guaranteed Euler's method\ifdefined\VersionArXiv\thanks{%
	This is the author %
	version of the manuscript of the same name published in the proceedings of the 2021 American Control Conference (\href{https://acc2021.a2c2.org/}{ACC 2021}).
}\fi
}

\author{%
Jawher Jerray$^{1}$\orcidID{0000-0001-6170-7489}
\thanks{$^{1}$\href{https://lipn.univ-paris13.fr/~jerray/}{Jawher Jerray} is with Université Sorbonne Paris Nord, LIPN, CNRS, UMR 7030, F-93430, Villetaneuse, France
        \nolinkurl{jerray@lipn.univ-paris13.fr}}
\and
Laurent Fribourg$^{2}$
\thanks{$^{2}$\href{http://www.lsv.fr/~fribourg/}{Laurent Fribourg} is with the university Paris-Saclay, CNRS, ENS Paris-Saclay, LMF, F-91190 Gif-Sur-Yvette, France
        \nolinkurl{fribourg@lsv.fr}}
\and
\'Etienne Andr\'e$^{3}$\orcidID{0000-0001-8473-9555}
\thanks{$^{3}$\href{https://www.loria.science/andre/}{\'{E}tienne Andr\'{e}} is with Université de Lorraine, CNRS, Inria, LORIA, Nancy, France.
        }
}

\graphicspath{{./figures/}}

\ifdefined\VersionArXiv\pagestyle{plain}\fi

\begin{document}

\maketitle              %
\begin{abstract}
In this paper, we consider the application
of optimal periodic control sequences to 
switched dynamical systems. The control sequence is obtained using
a finite-horizon optimal method based on dynamic programming. We
then consider Euler approximate solutions for
the system extended with bounded perturbations.
The main result gives
a simple condition on the perturbed system for
guaranteeing the existence of a stable limit cycle of the
unperturbed system. An illustrative numerical example is
provided which demonstrates the applicability of the method.
\end{abstract}

\section{Introduction}\label{sec:intro}

When considering the optimization of
real-time processes, %
it has been shown that a
{\em periodic time-dependent} control often yields better performance than
a simple time-invariant steady-state control. This observation
has led to the creation of the field
of Optimal Periodic Control (OPC) theory in the 70's (see~\cite{Gilbert77} and  references
therein).
These periodic controls are {\em open-loop} (no feedback), so they are not  
{\em a priori} ``robust'' or ``stable''  against possible perturbations or uncertainties, and special attention must be paid 
to ensure the {\em robustness} of such controls against possible disturbances (see, \eg{}~\cite{Wang2019,DaiT12,ThuruthelFML18}).
Among recent works on new methods of robust OPC, we focus here on a line of research developed by Houska and co-workers
\cite{HouskaCDC09,SternbergACC12,SternbergIFAC12}. Their methodology 
consists in generating a ``central
optimal path'' for the case of a null perturbation, which is surrounded 
by a ``tube'', which is invariant in a robust manner
(\ie{} in the presence of a bounded perturbation $w\in{\cal W}$).
Here we consider a simplified problem compared to that of~\cite{HouskaCDC09} (cf.~\cite{NagyB03}):
we focus on the optimal open-loop control of the system {\em without} perturbation (``nominal control'') and analyze its robustness in the presence of perturbation
while~\cite{HouskaCDC09} modifies the nominal control in order to satisfy 
additional prescribed constraints on the state of the system (``robustified control'').

We keep the idea of ``tube'' used in~\cite{HouskaCDC09,SternbergACC12,SternbergIFAC12}, but we make use of recent
results related to approximate solutions
by Euler's method 
(see~\cite{LeCoentF19,CoentF19}).  
Our method makes a preliminary use of a 
dynamic programming (DP) method for generating a finite sequence
of control $\pi$ which solves a finite horizon optimal problem 
in the absence of perturbation. 
We then calculate
an approximate Euler solution of the unperturbed system denoted by $\tilde{Y}(t)$ under $\pi^*$, which corresponds to the sequence~$\pi$ applied repeatedly.
We consider the tube defined by
${\cal B}_{{\cal W}}(t)$ of the form $B(\tilde{Y}(t),\delta_{{\cal W}}(t))$\footnote{We write $B(x,d)$ to denote the ball of center $x$ and radius $d$, \ie{} the set of elements $y$ such that $\|y-x\|\leq d$, where $\|\cdot\|$ is the Euclidean norm.}
where $\tilde{Y}(t)$ is the central path, and 
$\delta_{{\cal W}}(t)$ an upperbound of the deviation due to ${\cal W}$.
The main contribution of this paper is to give a simple condition  on ${\cal B}_{{\cal W}}(t)$ which guarantees that the system is 
``stable in the presence of perturbation'' in the following sense: the unperturbed system under $\pi^*$ is guaranteed to converge towards an attractive {\em limit cycle} (LC)~${\cal L}$, and the trajectories of the 
perturbed system under $\pi^*$ are guaranteed to remain inside
${\cal B}_{{\cal W}}(t)$, which is
a ``torus'' surrounding~${\cal L}$.

In contrast with many methods of OPC using elements of the theory of
LCs,
our method does not use any 
notion of  ``Lyapunov function'' (as, \eg{} in~\cite{SternbergACC12,SternbergIFAC12}) or ``monodromy matrix'' (as, \eg{} in~\cite{HouskaCDC09}).
We also explain how to compute a rate of {\em local} contraction of the system
in order to obtain more accurate results 
than those obtained using global contraction (see, \eg{} 
\cite{AminzareS14,ManchesterCDC13}).
The simplicity of our method
is illustrated
on a classical example of bioreactor (see~\cite{HouskaCDC09}).

\subsection*{Plan of the paper}
In \cref{sec:robust}, we recall the principles of the Euler-based method,
described in \cite{LeCoentF19,CoentF19,AdrienRP17},
for finding a finite control sequence $\pi$ that solves a finite-horizon
optimal control problem.
In \cref{sec:OPC}, we give a simple condition that ensures
the robustness of the control (\cref{th:3});
the method is illustrated on the bioreactor example of~\cite{HouskaCDC09}.
We conclude in \cref{sec:conclusion}.

\section{Optimal control using Euler time integration}\label{sec:robust}
We present here the Euler-based method of optimal control synthesis
given in \cite{LeCoentF19,CoentF19,AdrienRP17}.%

\subsection{Explicit Euler time integration}\label{ss:Euler}
We consider here  a time discretization of time-step $\tau$, and 
we suppose that the control law ${\bf u}(\cdot)$ is a {\em piecewise-constant} function, which takes its
values on a {\em finite} set~$U\subset \mathbb{R}^m$, called ``modes'' (or ``control inputs'').
Given $u\in U$, let us consider the differential system controlled by~$u$: 
$$\frac{dy(t)}{dt}=f_u\big(y(t)\big)\text{.}$$
where $f_u(y(t))$ stands for $f({\bf u}(t),y(t))$ with ${\bf u}(t)=u$ for $t\in[0,\tau]$, and $y(t)\in\mathbb{R}^n$ denotes the state of the system at time $t$.
The function~$u$ is assumed to be Lipschitz continuous.
We use $Y_{y_0}^u(t)$ to denote the exact continuous solution~$y$ %
of the system at time~$t\in[0,\tau]$ under constant control $u$,
with initial condition~$y_0$.
This solution is approximated using the 
{\em explicit Euler} integration method. We use $\tilde{Y}^u_{y_0}(t)\equiv
y_0+tf_u(y_0)$ to denote Euler's approximate value of $Y^u_{y_0}(t)$ for $t\in [0,\tau]$.

Given a sequence of modes (or ``pattern'') $\pi := u_1\cdots u_k\in U^k$, we denote by
$Y_{y_0}^{\pi}(t)$
the solution of the system under  mode $u_1$ on $t\in [0,\tau)$
with initial condition~$y_0$,
extended continuously with the solution of the system under mode $u_{2}$ on $t\in[\tau,2\tau]$, and so on iteratively until mode $u_k$ on 
$t\in[(k-1)\tau,k\tau)$.
The control function ${\bf u}(\cdot)$ is thus piecewise constant  with ${\bf u}(t)=u_{i}$
for $t\in [(i-1)\tau,i\tau)$, $1\leq i\leq k$.
Likewise, we use
$\tilde{Y}_{y_0}(t)^{\pi}$ to denote Euler's approximate value of $Y_{y_0}^\pi(t)$
for $t\in [0,k\tau)$
defined by
$\tilde{Y}_{y_0}^{u_1\cdots u_i}(t)=\tilde{Y}_{y_0}^{u_1\cdots u_{i-1}}(t)+tf_{u_i}(\tilde{Y}_{y_0}^{u_1\cdots u_{i-1}}(t))$ for $t\in [0,\tau)$ and $2\leq n\leq k$.
The approximate solution $\tilde{Y}_{y_0}^{\pi}(t)$ is here a continuous piecewise linear function
on $[0,k\tau)$ starting at~$y_0$.
Note that we have supposed here that the step size $\Delta t$ used in Euler's integration method was equal to the sampling period~$\tau$ of the switching system. Actually, in order to have better  approximations,  it  is  often  convenient  to  take  a  fraction  of~$\tau$ as  for $\Delta t$ (\eg{} $\Delta t= \tau/400$). Such a splitting is called ``sub-sampling'' in numerical methods 
(see~\cite{SNR17}). Henceforth, we will suppose 
that $k\in\mathbb{N}$ is the length of the pattern $\pi$, and $T=k\tau=K\Delta t$ for some $K$ multiple of $k$, and $T>0$.

\subsection{Finite horizon and dynamic programming}\label{ss:approx}

The optimization task is to find a control pattern $\pi\in U^k$ which guarantees that all states in a given set ${\cal S}=[0,1]^n\subset \mathbb{R}^n$ \footnote{We take here ${\cal S}=[0,1]^n$ for the sake of notation simplicity, but ${\cal S}$ can be any convex subset of $\mathbb{R}^n$.} are steered 
at time $t_\mathit{end}=k\tau$  as closely as possible to an end state $y_\mathit{end}\in {\cal S}$.
Let us explain the principle of the method based on DP and Euler integration
method used in~\cite{LeCoentF19,CoentF19}.
We consider the {\em cost function}: $J_{k}:{\cal S}\times U^k\rightarrow \mathbb{R}_{\geq 0}$ 
defined by:
$$J_{k}(y,\pi)=\|Y_{y}^{\pi}(k\tau)-y_\mathit{end}\|\text{,}$$
where 
$\|\cdot\|$ denotes the Euclidean norm in $\mathbb{R}^n$\footnote{We consider here the special case where the cost function is only made of a ``terminal'' subcost. The method extends to more general cost functions. Details will be given in the extended version of this paper.}

We consider the {\em value function} ${\bf v}_k:{\cal S}\rightarrow \mathbb{R}_{\geq 0}$
defined by:
$${\bf v}_k(y) := \min_{\pi\in U^k}\big\{J_{k}(y,\pi)\big\}\equiv
\min_{\pi\in U^k} \big\{\|Y_{y}^{\pi}(k\tau)-y_\mathit{end}\|\big\}\text{.}$$ 

Given $k\in\mathbb{N}$ and $\tau\in\mathbb{R}_{>0}$, we consider the following {\em finite time horizon optimal control problem}: 
 Find for each $y\in {\cal S}$
\begin{itemize}
\item the {\em value} 
${\bf v}_k(y)$, \ie{}
$$\min_{\pi\in U^k}\big\{\|Y_{y}^{\pi}(k\tau)-y_\mathit{end}\|\big\}\text{,}$$

\item and an {\em optimal pattern}:
$$\pi_k(y) := \argmin_{\pi\in U^k} \big\{\|Y_{y}^{\pi}(k\tau)-y_\mathit{end}\|\big\}\text{.}$$
\end{itemize}
We then discretize the space ${\cal S}$ by means of a grid ${\cal X}$ such
that any point $y_0\in{\cal S}$ has an ``$\varepsilon$-representative'' $z_0\in{\cal X}$
with $\|y_0-z_0\|\leq \varepsilon$, for a given value $\varepsilon>0$.
As explained in~\cite{CoentF19}, it is easy to
construct via DP
a procedure $\mathit{PROC}_k^{\varepsilon}$ which, for any $y\in{\cal S}$,
takes its representative $z\in{\cal X}$ as input, and returns 
a pattern~$\pi_k^{\varepsilon}\in U^k$ corresponding
to an approximate optimal value of ${\bf v}_k(y)$. %

\begin{example}\label{ex:1}
We consider a biochemical process model $Y$ of continuous culture fermentation (see~\cite{HouskaCDC09} as well as \cite{AKR89,KSC93,Par00,RC08}).
Let $Y = (X, S, P) \in \mathbb{R}^3$ satisfies the differential system:   
\begin{equation*}
\begin{cases}
\overset{.}{X}(t) = -D X(t) + \mu(t) X(t)\\
\overset{.}{S}(t) = D \big( S_f(t) - S(t) \big) - \frac{\mu(t)X(t)}{Y_{x/s}}\\
\overset{.}{P}(t) = -D P + \big( \alpha\mu(t) + \beta \big) X(t)\\
\end{cases}
\label{biochemical-equations}
\end{equation*}

\noindent where $X$ denotes the biomass concentration, $S$ the substrate concentration, and $P$ the
product concentration of a continuous fermentation process.
The model is controlled by $S_f \in [S_f^\mathit{min}, S_f^\mathit{max}]$.
While the dilution rate~$D$, the biomass yield $Y_{x/s}$ , and the
product yield parameters $\alpha$ and $\beta$ are assumed to be constant
and thus independent of the actual operating condition, the
specific growth rate $\mu : \mathbb{R} \rightarrow \mathbb{R}$ of the biomass is a function
of the states:

\begin{equation*}
\mu(t) = \mu_m\frac{\left( 1-\frac{P(t)}{P_m}\right) S(t)}{K_m + S(t)+\frac{S(t)^2}{K_i}}
\label{mu-biochemical-equation}
\end{equation*}

The parameters values are as follows:
$D=0.15h^{-1}$,
$K_i=22\frac{g}{L}$,
$K_m=1.2\frac{g}{L}$,
$P_m=50\frac{g}{L}$,
$Y_{x/s}=0.4$,
$\alpha=2.2$,
$\beta=0.2h^{-1}$,
$\mu_m=0.48h^{-1}$,
$\overline{S_f}=32.9\frac{g}{L}$,
$S_f^\mathit{min}=28.7\frac{g}{L}$,
$S_f^\mathit{max}=40\frac{g}{L}$.
The goal is to maximize the average productivity presented by the cost function:
\begin{equation*}
J_k = \frac{1}{T}\int^T_0DP(t)dt
\label{cout-biochemical}
\end{equation*}

The domain ${\cal S}$ of the states $(X, S, P)$ is equal to $[4.8, 7.5] \times [11, 26] \times [17.5, 26]$.
The grid~${\cal X}$ corresponds to a discretization of
${\cal S}$, where each component is uniformly discretized into a set of $\kappa$ points.
The codomain $[28.7, 40]$ of the original continuous control function $S_f(\cdot)$ is itself discretized into a finite set~$U$, for the needs of our method. After discretization, $S_f(\cdot)$ is a 
piecewise-constant function that takes its values
in the set $U$ made of 300 values uniformly taken in $[28.7, 40]$.
The function $S_f(\cdot)$ can change its value every $\tau$ seconds.

We consider:
$\tau = 1$, $\kappa=200$, $\Delta t=1/400$, $T=t_{\mathit{end}} = 48$, $k=T/\tau=48$ and $\varepsilon=\sqrt{n}/2\kappa=\sqrt{3}/400\approx 0.004$.
For $z_0 = (X(0), S(0), P(0))=(6.52, 12.5, 22.4) \in \mathcal{X}$,
The pattern $\pi$ (repeated 4 times) output by $\mathit{PROC}_k^\varepsilon(z_0)$ is depicted on \cref{fig:biochemical-init-uncertainty-periodic-3} (bottom).
For $\pi$,  we have: $J_k = 3.642$.\footnote{By comparison, the optimal cost found in \cite{HouskaCDC09} is equal to $3.11$, but satisfies a constraint on the maximum concentration $X$, which has been ignored here.}
\end{example}

\subsection{Correctness of the method}\label{ss:error}
Given a point $y\in {\cal S}$ of $\varepsilon$-representative $z\in {\cal X}$,
and a pattern $\pi^\varepsilon_k$ returned by~$\mathit{PROC}_k^\varepsilon(z)$, we are now going to show that the distance
$\|\tilde{Y}_{z}^{\pi^\varepsilon_k}(k\tau),-y_\mathit{end}\|$ converges to~${\bf v}_k(y)$
as $\varepsilon\rightarrow 0$.
We first consider the ODE:
$\frac{dy}{dt}=f_u(y)$, and give an upper bound to the error between
the exact solution of the ODE and
its Euler approximation (see~\cite{CoentF19,SNR17}).
\begin{definition}\label{def:delta}
Let $\mu$ be a given positive constant. Let us define, for all 
$u\in U$ and $t\in [0,\tau]$,
$\delta^u_{\mu}(t)$ as follows:
$\mbox{if } \lambda_u <0:$
$$\delta^u_{\mu}(t)=\left(\mu^2 e^{\lambda_u t}+
 \frac{C_u^2}{\lambda_u^2}\left(t^2+\frac{2 t}{\lambda_u}+\frac{2}{\lambda_u^2}\left(1- e^{\lambda_u t} \right)\right)\right)^{\frac{1}{2}}$$
$\mbox{if } \lambda_u = 0:$
$$\delta^u_{\mu}(t)= \Big( \mu^2 e^{t} + C_u^2 \big(- t^2 - 2t + 2 (e^t - 1) \big) \Big)^{\frac{1}{2}}$$
$\mbox{if } \lambda_u > 0:$
$$\delta^u_{\mu}(t)=\left(\mu^2 e^{3\lambda_u t}+ 
\frac{C_u^2}{3\lambda_u^2}\left(-t^2-\frac{2t}{3\lambda_u}+\frac{2}{9\lambda_u^2}
\left(e^{3\lambda_u t}-1\right)\right)\right)^{\frac{1}{2}}$$

\noindent where $C_u$ and $\lambda_u$ are real constants specific to function $f_u$,
defined as follows:
$$C_u=\sup_{y\in {\cal S}} L_u\|f_u(y)\|\text{,}$$
\noindent where $L_u$ denotes the Lipschitz constant for $f_u$, and
$\lambda_u$ is the ``one-sided Lipschitz constant'' (or ``logarithmic Lipschitz constant'' \cite{AminzareS14}) associated to $f_u$, \ie{} the 
minimal constant such that, for all $y_1,y_2\in {\cal T}$:
\begin{equation}\label{eq:H0}
	\langle f_u(y_1)-f_u(y_2), y_1-y_2\rangle \leq \lambda_u\|y_1-y_2\|^2\text{,}
\end{equation}
\noindent where 
$\langle\cdot,\cdot\rangle$ denotes the scalar product of two vectors
of ${\cal T}$, and
${\cal T}$ is a
convex and compact overapproximation of ${\cal S}$ such that
$${\cal T}\supseteq \{Y_{y_0}^u(t)\ |\ u\in U, 0\leq t\leq \Delta t, y_0\in{\cal S}\}.$$ 
\end{definition}
The constant $\lambda_u$ can be computed using a nonlinear optimization solver (\eg{} CPLEX~\cite{cplex2009v12}) or using the Jacobian matrix of $f$ (see, \eg{}~\cite{AminzareS14}).

\begin{proposition}\label{prop:basic}\cite{SNR17}
Consider the solution $Y_{y_0}^u(t)$ of $\frac{dy}{dt}=f_u(y)$ with
initial condition~$y_0$ of $\varepsilon$-representative $z_0$
(hence such that $\|y_0-z_0\|\leq\varepsilon$), 
and the approximate
solution $\tilde{Y}_{z_0}^u(t)$ given by the explicit Euler scheme.
For all $t\in[0,\tau]$, we have:
$$\|Y_{y_0}^u(t)-\tilde{Y}_{z_0}^u(t)\|\leq \delta^u_{\varepsilon}(t)\text{.}$$
\end{proposition}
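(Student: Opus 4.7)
The proof strategy is to derive a linear differential inequality on the squared error $w(t) := \|Y_{y_0}^u(t) - \tilde{Y}_{z_0}^u(t)\|^2$ and then integrate it via a scalar variation-of-constants formula. Setting $e(t) := Y_{y_0}^u(t) - \tilde{Y}_{z_0}^u(t)$ and using that the Euler segment $\tilde{Y}_{z_0}^u(t) = z_0 + t f_u(z_0)$ has constant derivative $f_u(z_0)$, I obtain
$$\tfrac{d}{dt} w(t) = 2\langle e(t),\, f_u(Y_{y_0}^u(t)) - f_u(\tilde{Y}_{z_0}^u(t))\rangle + 2\langle e(t),\, f_u(\tilde{Y}_{z_0}^u(t)) - f_u(z_0)\rangle.$$
The one-sided Lipschitz assumption~\eqref{eq:H0} bounds the first term by $2\lambda_u w(t)$ (since all intermediate states lie in $\mathcal{T}$ by the standing assumption of \cref{def:delta}), and Cauchy--Schwarz combined with the Lipschitz constant $L_u$ and the identity $\|\tilde{Y}_{z_0}^u(t) - z_0\| = t\|f_u(z_0)\|$ bounds the second by $2 C_u t\,\sqrt{w(t)}$, using the definition of $C_u$.

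Next I would linearise the cross term $2 C_u t \sqrt{w(t)}$ by Young's inequality $2ab \leq \alpha a^2 + b^2/\alpha$, selecting the parameter $\alpha$ according to the sign of $\lambda_u$: $\alpha = -\lambda_u$ when $\lambda_u<0$, $\alpha=1$ when $\lambda_u=0$, and $\alpha=\lambda_u$ when $\lambda_u>0$. These are the choices for which the net coefficient of $w(t)$ in the resulting inequality becomes $\lambda_u$, $1$, and $3\lambda_u$ respectively, matching the three exponentials that appear in \cref{def:delta}. In all three cases one ends up with a scalar linear differential inequality of the form $w'(t) \leq A\, w(t) + B\, t^2$, with $A$ and $B$ explicit constants depending only on $\lambda_u$ and $C_u$.

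Finally I would apply the scalar comparison principle to this inequality with initial datum $w(0) = \|y_0 - z_0\|^2 \leq \varepsilon^2$, yielding
$$w(t) \leq \varepsilon^2 e^{At} + B \int_0^t e^{A(t-s)} s^2\, ds,$$
and reduce the remaining integral to closed form by two integrations by parts. A direct calculation shows that in each of the three regimes the resulting expression coincides term-for-term with $\delta^u_\varepsilon(t)^2$ as defined in \cref{def:delta}, and taking square roots gives the claim. The main obstacle is essentially book-keeping: one must choose the Young parameter precisely so that the constants $A,B$ line up with the formulas of \cref{def:delta}, and verify the positivity of $B$ in the $\lambda_u<0$ regime (where $B=-C_u^2/\lambda_u>0$). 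The degenerate case $\lambda_u=0$ can be treated either directly with $\alpha=1$ or as the limit of the two other formulas.
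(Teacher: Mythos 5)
Your proposal is correct and takes essentially the same route as the proof the paper relies on (the paper itself only cites \cite{SNR17} for this statement): differentiate the squared error, split $f_u(Y)-f_u(z_0)$ into the one-sided-Lipschitz part and the Euler-segment part bounded by $C_u t$ via Cauchy--Schwarz, linearise with Young's inequality, and integrate the resulting linear differential inequality by the scalar comparison lemma. Your parameter choices $\alpha=-\lambda_u$, $1$, $\lambda_u$ indeed yield the exponents $\lambda_u$, $1$, $3\lambda_u$, and the explicit integration reproduces $\delta^u_\varepsilon(t)$ term-for-term in all three regimes, so the argument is complete.
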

\begin{remark}
The function $\delta_\varepsilon^u(\cdot)$ is similar to the ``discrepancy function'' used in \cite{FanM15}, but it gives an upper-bound  on the distance between an exact solution and an Euler approximate solution while the discrepancy function gives an upper-bound on the distance between any two exact solutions.
\end{remark}
\cref{prop:basic} underlies the principle of our set-based method
where set of points are represented as balls centered around the 
Euler approximate values of the solutions. This illustrated in~\cref{fig:illustration}: for any initial condition $x^0$ belonging
to the ball $B(\tilde{x}^0,\delta(0))$,%
the exact solution $x^1\equiv Y_{x^0}^u(\tau)$ belongs to the ball $B(\tilde{x}^1,\delta(\tau))$ where $\tilde{x}^1 \equiv\tilde{Y}_{\tilde{x}^0}^u(\tau)$ denotes the Euler approximation of the exact solution at 
$t=\tau$, and 
$\delta^u(\tau)\equiv\delta^u_{\delta(0)}(\tau)$.
\begin{figure}[h!]
\centering
\includegraphics[scale=0.5]{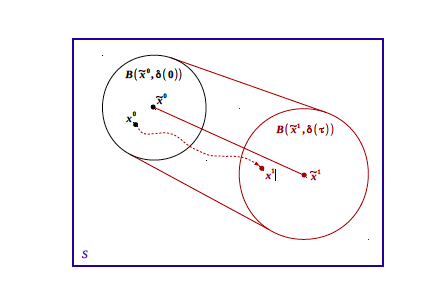}
\caption{Illustration of \cref{prop:basic}}
\label{fig:illustration}
\end{figure}
We have:

\begin{theorem}[convergence \cite{CoentF19}]\label{th:2}
Let $y\in {\cal S}$ be a point
 of $\varepsilon$-representative $z\in {\cal X}$. Let
$\pi_k^\varepsilon$ be the pattern returned by $\mathit{PROC}_k^\varepsilon(z)$,
and $\pi^\sharp := \argmin_{\pi\in U_k} \|Y^\pi_{y}(k\tau)-y_{f}\|$.
Let ${\bf v}_k(y) := \|Y_{y}^{\pi^\sharp}(k\tau)-y_\mathit{end}\|$ be the exact optimal value
of $y$. %
The approximate optimal value of~$y$,
$\|\tilde{Y}_{y}^{\pi_k^\varepsilon}(k\tau)-y_\mathit{end}\|$, converges to 
${\bf v}_k(y)$ as $\varepsilon\rightarrow 0$.
\end{theorem}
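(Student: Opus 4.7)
The plan is to sandwich $\|\tilde{Y}_y^{\pi_k^\varepsilon}(k\tau)-y_\mathit{end}\|$ between two quantities that both tend to ${\bf v}_k(y)$ as $\varepsilon\to 0$, by iterating the per-step bound of \cref{prop:basic} across the $k$ constant-mode sub-intervals composing a pattern and then combining the result with the two optimality properties: that of $\pi_k^\varepsilon$ (for the Euler cost issued from $z$) and that of $\pi^\sharp$ (for the exact cost issued from $y$).

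The first step promotes \cref{prop:basic} into a pattern-uniform global bound. Setting $\Delta_0(\varepsilon):=\varepsilon$ and $\Delta_{i+1}(\varepsilon):=\delta^{u_{i+1}}_{\Delta_i(\varepsilon)}(\tau)$, a straightforward induction on $i$ gives, for every pattern $\pi=u_1\cdots u_k$ and every $y\in{\cal S}$ with $\varepsilon$-representative $z$,
\begin{equation*}
\|Y_y^\pi(k\tau)-\tilde{Y}_z^\pi(k\tau)\|\leq \Delta_k(\varepsilon),
\end{equation*}
and, taking $y$ as its own representative ($\mu=0$), the variant $\|Y_y^\pi(k\tau)-\tilde{Y}_y^\pi(k\tau)\|\leq \Delta_k(0)$. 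Combining these with the two optimalities is then bookkeeping. From $\|\tilde{Y}_z^{\pi_k^\varepsilon}(k\tau)-y_\mathit{end}\|\leq\|\tilde{Y}_z^{\pi^\sharp}(k\tau)-y_\mathit{end}\|$, two triangle inequalities absorb two copies of $\Delta_k(\varepsilon)$ and one more absorbs $\Delta_k(0)$, producing
\begin{equation*}
\|\tilde{Y}_y^{\pi_k^\varepsilon}(k\tau)-y_\mathit{end}\|\leq {\bf v}_k(y)+2\Delta_k(\varepsilon)+\Delta_k(0);
\end{equation*}
dually, $\pi^\sharp$ minimizes the exact cost from $y$, so $\|Y_y^{\pi_k^\varepsilon}(k\tau)-y_\mathit{end}\|\geq{\bf v}_k(y)$, and subtracting the Euler discrepancy yields the matching lower bound ${\bf v}_k(y)-\Delta_k(0)$.

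The main obstacle is the convergence of the compound bound itself. The per-step formula $\delta^u_\mu(\tau)$ from \cref{def:delta} contains, beyond the $\mu$-dependent propagation, a residual $C_u$-driven contribution that does not vanish with $\mu$ alone; driving $\Delta_k(\varepsilon)$ and $\Delta_k(0)$ to zero therefore requires the sub-sampling step $\Delta t$, implicit in how these formulas are derived, to refine jointly with the grid parameter $\varepsilon$. Once that refinement is granted, letting $\varepsilon\to 0$ collapses the two-sided estimate and establishes the claim.
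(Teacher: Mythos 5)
The paper itself offers no proof of \cref{th:2} (it is imported from \cite{CoentF19}), so there is nothing internal to compare against; your sandwich strategy --- iterate the one-step bound of \cref{prop:basic} into a compounded radius $\Delta_k(\varepsilon)$, then play the optimality of $\pi_k^\varepsilon$ and of $\pi^\sharp$ against each other with triangle inequalities --- is the natural route and is in the spirit of the cited argument. The genuine gap is the middle inequality $\|\tilde{Y}_z^{\pi_k^\varepsilon}(k\tau)-y_\mathit{end}\|\leq\|\tilde{Y}_z^{\pi^\sharp}(k\tau)-y_\mathit{end}\|$, which you treat as the defining property of $\pi_k^\varepsilon$. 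According to \cref{ss:approx}, $\mathit{PROC}_k^\varepsilon$ is a dynamic-programming procedure over the grid ${\cal X}$: the recursion re-projects the state onto an $\varepsilon$-representative at each of the $k$ stages (this is precisely what makes it tractable; it is not an exhaustive minimization of $\pi\mapsto\|\tilde{Y}_z^{\pi}(k\tau)-y_\mathit{end}\|$ over $U^k$). Hence $\pi_k^\varepsilon$ is optimal only for the grid-projected Euler dynamics, and your inequality holds only up to per-stage projection errors of size $\varepsilon$ propagated through the flow. Your $\Delta_k(\varepsilon)$ machinery (with $\Delta_0=\varepsilon$, $\Delta_{i+1}=\delta^{u_{i+1}}_{\Delta_i}(\tau)$, using monotonicity of $\delta^u_\mu$ in $\mu$ and the requirement that the intermediate balls stay in ${\cal T}$) is exactly the right tool to absorb them, but the stage-by-stage comparison between the DP value at $z$ and the cost of an arbitrary pattern is the technical core of the convergence proof in \cite{CoentF19} and is absent from your write-up.

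Your closing caveat is correct but should be promoted to an explicit hypothesis rather than ``granted'': for a fixed integration step, $\Delta_k(0)$ is a strictly positive floor (the $C_u$-driven terms in \cref{def:delta} do not vanish with $\mu$), so $\|\tilde{Y}_y^{\pi_k^\varepsilon}(k\tau)-y_\mathit{end}\|$ cannot converge to ${\bf v}_k(y)$ under $\varepsilon\to 0$ alone; the statement must be read with the Euler (sub-sampling) step of \cref{ss:Euler} refined jointly with the grid parameter, which is how the cited result is meant. With that reading, and with the optimality step repaired as above, your two-sided estimate does yield the claim.
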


\cref{th:2} formally justifies the correctness of our method
of optimal control synthesis by saying that 
the approximate optimal values computed by our method converge to the
exact optimal values when the mesh size tends to~$0$.

\subsection{Extension to systems with perturbation}
Let us now show how the method extends to 
systems with ``bounded perturbations'', and assess its robustness.
A differential system with ``bounded perturbations'' is of the form 
$$\frac{dy(t)}{dt}=f_u\big(y(t),w(t)\big)\text{,}$$
with $u\in U$, $t\in [0,\tau]$,
states $y(t)\in\mathbb{R}^n$, and perturbations $w(t)\in{\cal W}\subset \mathbb{R}^d$ (${\cal W}$ is compact, \ie{} closed and bounded).
See, \eg{}~\cite{SchurmannA17b}.\ea{?}
Any possible perturbation trajectory is thus bounded in ${\cal W}$, and
there exists
$\omega\in\mathbb{R}_{\geq 0}$ such that $\forall t\in[0,\tau]$, $\|w(t)\|\leq \omega$.
Given a perturbation $w\in{\cal W}$, we use $Y_{y_0,w}^u(t)$ to denote the solution of
$\frac{dy(t)}{dt}=f_u(y(t),w(t))$ for $t\in[0,\tau]$ with $y(0)=y_0$.
We use
$Y_{y_0,{\bf 0}}^u(t)$
(resp.\ $\tilde{Y}_{y_0}^u(t)$)
to denote the solution (resp.\ the approximate Euler solution) without perturbations, \ie{} when ${\cal W}=0$.

Given a pattern $\pi=u_k\cdots u_1\in U^k$, these notations extend naturally to $t\in [0,k\tau]$ by considering the solutions
obtained by applying successive modes $u_k,\dots, u_1$ in a continuous manner.
The optimization task is now to find a control pattern $\pi\in U^k$ which guarantees that all states in ${\cal S}\subset \mathbb{R}^n$ are steered 
at time $t=k\tau$ as closely as possible to an end state $y_\mathit{end}$, {\em despite the
 perturbation set} ${\cal W}$. 
We suppose (see~\cite{AdrienRP17}) that, for all $u\in U$,
there exist constants $\lambda_u\in\mathbb{R}$ 
and $\gamma_u\in\mathbb{R}_{\geq 0}$ such that,
for all $y_1,y_2\in {\cal T}$ and $w_1,w_2\in {\cal W}$:
\begin{multline}\label{eq:H1}
	\langle f_u(y_1,w_1)-f_u(y_2,w_2), y_1-y_2\rangle
	\\
	\leq 
\lambda_u\|y_1-y_2\|^2 + \gamma_u \|y_1-y_2\|\|w_1-w_2\|
\end{multline}

This formula can be seen as a generalization of \cref{eq:H0} (see \cref{sec:robust}). 
Recall that $\lambda_u$ has to be computed in the absence of perturbation
(${\cal W}=0$). The additional constant $\gamma_u$ is used for taking into account
the perturbation $w$.
Given $\lambda_u$, the constant $\gamma_u$ can be computed itself
using a nonlinear optimization solver (\eg{} CPLEX~\cite{cplex2009v12}). Instead of computing them globally
for~${\cal T}$, it is advantageous to compute $\lambda_u$ and $\gamma_u$ {\em locally} depending on the subregion of~${\cal T}$ occupied by the system state during a considered interval of time $\Delta t$.
Note that the notion of contraction (often used in the literature~\cite{ManchesterCDC13,AminzareS14}) corresponds to the case where $\lambda_u$ is {\em negative} on the whole space set of interest.
(Here, $\lambda_u$ can be positive, at least locally, see \cref{rk:local}.)

We now give a version of \cref{prop:basic} with bounded perturbation $w(\cdot)\in{\cal W}$, originally proved in~\cite{AdrienRP17}.
\begin{proposition}[\cite{AdrienRP17}]\label{prop:1bis}
Consider a sampled switched system with bounded perturbation 
of the form $\{\frac{dy(t)}{dt}=f_u(y(t),w(t))\}_{u\in U}$
satisfying \cref{eq:H1}.

Consider a point~$y_0\in {\cal S}$ of $\varepsilon$-representative $z_0\in {\cal X}$.
We have,
for all  $u\in U$, $t\in[0,\tau]$ and $w(t)$ with $\|w(t)\|\leq \omega$: %

$$\|Y^{u}_{y_0, w}(t)-\tilde{Y}^u_{z_0}(t)\|\leq \delta^u_{\varepsilon,{\cal W}}(t)$$
with %
\begin{itemize}
\item if $\lambda_{u} <0$,
\begin{multline}
 \delta^u_{\varepsilon,{\cal W}}(t) = 
 \left( \frac{C_{u}^2}{-\lambda_{u}^4} \left( - \lambda_{u}^2 t^2 - 2 \lambda_{u} t + 2 e^{\lambda_{u} t} - 2 \right) \right.   \\
 + \left. \frac{1}{\lambda_{u}^2} \left( \frac{2C_{u} \gamma_{u} \omega}{-\lambda_{u}} \left( - \lambda_{u} t + e^{\lambda_{u} t} -1 \right) \right. \right.  \\ + \left. \left. \lambda_{u} \left( \frac{\gamma_{u}^2 \omega^2}{-\lambda_{u}} (e^{\lambda_{u} t } - 1) + \lambda_{u} \varepsilon^2 e^{\lambda_{u} t}  \right) \right)  \right)^{1/2}
\end{multline}
\item if $\lambda_{u} >0$,
\begin{multline}
 \delta^u_{\varepsilon,{\cal W}}(t) = \frac{1}{(3\lambda_{u})^{3/2}} \left( \frac{C_u^2}{\lambda_{u}} \left( - 9\lambda_{u}^2 t^2 - 6\lambda_{u} t + 2 e^{3\lambda_{u} t}  \right. \right.   \\
 \left. - 2 \right) + \left. 3\lambda_{u} \left( \frac{2C_u \gamma_{u} \omega}{\lambda_{u}} \left( - 3\lambda_{u} t + e^{3\lambda_{u} t} -1 \right) \right. \right.  \\
 + \left. \left. 3\lambda_{u} \left( \frac{\gamma_{u}^2 \omega^2}{\lambda_{u}} ( e^{3\lambda_{u} t } - 1) + 3\lambda_{u} \varepsilon^2 e^{3\lambda_{u} t}  \right) \right)  \right)^{1/2}
\end{multline}
\item if $\lambda_{u} = 0$, 
\begin{multline}
 \delta^u_{\varepsilon,{\cal W}}(t)= 
 \left( {C_{u}^2} \left( -  t^2 - 2  t + 2 e^{ t} - 2 \right) \right.   \\
 + \left.  \left( {2C_{u} \gamma_{u} \omega} \left( -  t + e^{ t} -1 \right) \right. \right.  \\ + \left. \left.  \left({\gamma_{u}^2 \omega^2} ( e^{ t } - 1) +  \varepsilon^2 e^{ t}  \right) \right)  \right)^{1/2}
\end{multline}
\end{itemize}
\end{proposition}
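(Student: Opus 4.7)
The plan is to adapt the argument behind Proposition~1 by replacing the one-sided Lipschitz bound (Eq.~1) with its perturbed version (Eq.~2), while keeping the overall Grönwall-type strategy. Setting $e(t) := Y^u_{y_0,w}(t) - \tilde Y^u_{z_0}(t)$ and using that the Euler segment satisfies $\tfrac{d}{dt}\tilde Y^u_{z_0}(t) = f_u(z_0, 0)$, I would write
\begin{equation*}
\tfrac{1}{2}\tfrac{d}{dt}\|e(t)\|^2 = \langle e(t),\, f_u(Y^u_{y_0,w}(t), w(t)) - f_u(z_0, 0)\rangle,
\end{equation*}
and then add and subtract $f_u(\tilde Y^u_{z_0}(t), 0)$ to split the right-hand side into two pieces. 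Eq.~(2) applied with $y_1 = Y^u_{y_0,w}(t)$, $y_2 = \tilde Y^u_{z_0}(t)$, $w_1 = w(t)$, $w_2 = 0$ bounds the first piece by $\lambda_u\|e\|^2 + \gamma_u\omega\|e\|$, and Cauchy--Schwarz together with the Lipschitz constant $L_u$ and the identity $\tilde Y^u_{z_0}(t) - z_0 = t\, f_u(z_0, 0)$ bounds the second by $L_u t\|f_u(z_0, 0)\|\cdot\|e\| \le C_u t\|e\|$, using $C_u = \sup_y L_u\|f_u(y)\|$. Combining yields the scalar differential inequality
\begin{equation*}
\tfrac{d}{dt}\|e(t)\|^2 \leq 2\lambda_u \|e(t)\|^2 + 2(\gamma_u \omega + C_u t)\|e(t)\|.
\end{equation*}

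To linearize the square-root cross-term, I would apply Young's inequality $2b\sqrt{v} \leq \alpha v + b^2/\alpha$ with $v = \|e\|^2$ and $b = \gamma_u\omega + C_u t$, tuning $\alpha$ to the sign of $\lambda_u$. Choosing $\alpha = \lambda_u$ when $\lambda_u > 0$ collapses the coefficient of $v$ to $3\lambda_u$, producing the $e^{3\lambda_u t}$ factor appearing in the statement; choosing $\alpha = -\lambda_u$ when $\lambda_u < 0$ keeps the coefficient at $\lambda_u$, giving the $e^{\lambda_u t}$ factor; and $\alpha = 1$ when $\lambda_u = 0$ gives the $e^{t}$ factor. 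In each case the forcing is $(\gamma_u\omega + C_u t)^2/\alpha$, and Grönwall (or direct integration of the resulting first-order linear ODE) with $v(0) \le \varepsilon^2$ gives
\begin{equation*}
v(t) \le \varepsilon^2 e^{c\lambda_u t} + \int_0^t e^{c\lambda_u(t-s)}(\gamma_u\omega + C_u s)^2/\alpha\,ds,
\end{equation*}
with $c \in \{1,3\}$ (or the $\lambda_u = 0$ analogue). Expanding the square and evaluating the elementary antiderivatives of $s^k e^{-c\lambda_u s}$ for $k = 0,1,2$ produces the three closed forms stated in the proposition, grouped by the monomials $\gamma_u^2\omega^2$, $\gamma_u\omega C_u$, and $C_u^2$.

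The main obstacle is essentially bookkeeping: one must match the grouping of the six elementary integrals coming from the expanded $(\gamma_u\omega + C_u s)^2$ with the specific factorizations displayed in the statement, and handle the $\lambda_u = 0$ case as a separate computation rather than a formal limit since $\lambda_u$ sits in denominators. A quick sanity check is that at $t = 0$ each of the three expressions collapses to $\varepsilon^2$, consistent with $\|e(0)\| = \|y_0 - z_0\| \leq \varepsilon$, and that in the unperturbed limit $\omega = 0$ one recovers the expressions of Definition~1 that underlie Proposition~1.
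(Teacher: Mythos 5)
Your derivation is correct and, when the three Grönwall integrals are carried out, reproduces the stated closed forms exactly (and reduces to Definition~1 when $\omega=0$), which is precisely the argument of the cited source \cite{AdrienRP17}: the paper itself states \cref{prop:1bis} without proof, importing it from that reference. So your proposal matches the intended proof — error decomposition via $f_u(\tilde Y^u_{z_0}(t),0)$, the one-sided Lipschitz bound \cref{eq:H1} plus $C_u t$ for the Euler drift, Young's inequality with $\alpha\in\{-\lambda_u,\lambda_u,1\}$, then Grönwall — and no gap remains.
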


Let ${\cal B}^{u}_{{\cal W}}(t)\equiv B(\tilde{Y}^u_{z_0}(t),\delta^u_{\varepsilon,{\cal W}}(t))$.
\cref{prop:1bis} expresses that, for $t\in[0,\tau]$, the tube ${\cal B}^u_{{\cal W}}(t)$ contains all the 
solutions $Y^u_{y_0,w}(t)$ with $\|y_0-z_0\|\leq \varepsilon$ and $w\in{\cal W}$,
and is therefore {\em robustely (positive) invariant}.
The function $\delta^u_{{\cal W}}:[0,\tau]\rightarrow {\mathbb R}^n$ 
extends continuously to $\delta^\pi_{\varepsilon,{\cal W}}:[0,k\tau]\rightarrow {\mathbb R}^n$ for a sequence $\pi$ of $k$ modes, and the 
robust invariance property now holds for $t\in [0,k\tau]$. The function extends further continuously to $\delta^{\pi^*}_{\varepsilon,{\cal W}}(\cdot)$, when considering the {\em iterated application} of 
sequence $\pi$, and robust invariance property now holds for
all $t\geq 0$. Under the iterated application of $\pi$, 
we denote by $Y_{y_0,w}^{\pi^*}(t)$ %
 the exact solution at time $t$,
of the system
with perturbation $w\in{\cal W}$ and  initial condition~$y_0$. Likewise, we denote by
$\tilde{Y}^{\pi^*}_{z_0}(t)$
(or sometimes just $\tilde{Y}(t)$)
the approximate Euler solution at time $t$, of the system without perturbation,
with initial condition $z_0$.
\begin{remark}\label{rk:local}
Let us give an algorithm to compute local values of $\lambda_u$ (to simplify, we assume that $\lambda_u$ is independent of $u$). 
Given an initial ball $B_0$ with radius $d_0 := \varepsilon$,
we calculate the local value $\lambda^1$ of $\lambda_u$ and the ``successor'' ball $B_1$
of $B_0$ at $t=\Delta t$ as follows:
\begin{enumerate}
\item Select a candidate ${\cal T}_1$ for a convex zone including $B_0$ and calculate the contraction rate $-\lambda^{1}$ on ${\cal T}_{1}$.%
\item Calculate $B_1={\cal B}_{{\cal W}}(\Delta t)$ and $B'_1={\cal B}_{{\cal W}}(2\Delta t)$ using the function $\delta_{d_0,{\cal W}}$ associated with $\lambda^{1}$.
\item Check that $B_1$ and $B'_1$ are included in ${\cal T}_{1}$.
If yes, $B_1$ is indeed the successor ball (of radius $d_1=\delta_{d_0,{\cal W}}(\Delta t)$) of $B_0$; if not, go to step 1.
\end{enumerate}
We can repeat the process by taking $B_1$ as a new initial ball,
select a candidate zone ${\cal T}_2$ of rate $-\lambda^2$, calculate a ball~$B_2$ of radius $d_2=\delta_{d_1,{\cal W}}(\Delta t)$ using~$\lambda^2$, and so on iteratively.
\end{remark}
In the following we assume that the bound $\omega$ of the perturbation ${\cal W}$ is large enough so that, for all $\varepsilon\geq 0$ and all local rate of contraction $-\lambda$: 

$(H)\ \ \ \ \ \delta_{\varepsilon,{\cal W}}(\Delta t)\geq \varepsilon e^{\lambda\Delta t}$.

\section{Application to Guaranteed Robustness}\label{sec:OPC}
We suppose that a control sequence $\pi$ has been generated by $\mathit{PROC}_k^{\varepsilon}$
for solving the finite-horizon optimal control problem for the unperturbed system ($w = 0$, $T=k\tau=K\Delta t$).
We now give a simple condition on the system {\em with} perturbation ${\cal W}$ under~$\pi^*$ which guarantees the existence of a {\em stable LC~${\cal L}$} for the unperturbed system, as well as the
{\em boundedness} of the solutions of the perturbed system.
Let us consider the tube ${\cal B}_{{\cal W}}(t)\equiv B(\tilde{Y}^{\pi^*}_{z_0}(t),\delta^{\pi^*}_{\mu,{\cal W}}(t))$  for some $\mu\geq \varepsilon$.
\begin{lemma}\label{lemma:1}
Suppose
$$(*)\ \ \ \ 	{\cal B}_{{\cal W}}\big((i+K)\Delta t\big) \subset {\cal B}_{{\cal W}}(i\Delta t),\  \text{ for some } i\geq 0.$$
Then we have:	 
\begin{enumerate}
\item\label{item:1} 
The set ${\cal I}\equiv \{y\in {\cal B}_{{\cal W}}(t)\}_{t\in [i\Delta t, (i+K)\Delta t]}$ is an invariant of the perturbed system,
i.e.: if $y_0\in {\cal I}$ then $Y^{\pi^*}_{y_0,{\cal W}}(t)\in {\cal I}$ for all
$t\geq 0$.
\item\label{item:2} $\lambda^{i+1} +\cdots + \lambda^{i+K} < 0$, 
where $-\lambda^j$  ($j=i+1,\dots,i+K$)
is the local rate of contraction\footnote{See \cref{rk:local}.}  for 
the region $\{y\in {\cal B}_{{\cal W}}(t)\}_{t\in [(j-1)\Delta t,j\Delta t]}$.
\end{enumerate}
This implies that 
the distance between two solutions of the {\em unperturbed} system starting at ${\cal I}$ decreases exponentially every  $T=K\Delta t$ time-steps, and
each solution of the unperturbed system starting at ${\cal I}$
converges to an LC ${\cal L}\subset {\cal I}$.
\end{lemma}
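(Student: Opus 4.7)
The plan is to handle the three assertions in sequence, using \cref{prop:1bis} for tube invariance, assumption $(H)$ to turn $(*)$ into a per-period contraction rate, and the one-sided Lipschitz property \cref{eq:H0} together with a Banach-type fixed point argument for the limit cycle. For \cref{item:1}, I first observe the forward invariance of the flow tube: an iterated application of \cref{prop:1bis} with the running error $\delta^{\pi^*}_{\mu,\mathcal{W}}(\cdot)$ as initial offset on each successive sub-interval of length $\Delta t$ shows that $y_0 \in \mathcal{B}_{\mathcal{W}}(t_0)$ implies $Y^{\pi^*}_{y_0,w}(s) \in \mathcal{B}_{\mathcal{W}}(t_0+s)$ for every admissible $w$ and every $s\geq 0$. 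Given $y_0\in\mathcal{I}$, so $y_0\in\mathcal{B}_{\mathcal{W}}(t_0)$ for some $t_0\in[i\Delta t,(i+K)\Delta t]$, the orbit visits $\mathcal{B}_{\mathcal{W}}(t_0+s)\subset\mathcal{I}$ as long as $t_0+s\leq(i+K)\Delta t$, and at $s=(i+K)\Delta t-t_0$ it lies in $\mathcal{B}_{\mathcal{W}}((i+K)\Delta t)\subset\mathcal{B}_{\mathcal{W}}(i\Delta t)$ by $(*)$. Restarting the argument from this base point (with fresh time reference $i\Delta t$) and iterating gives $Y^{\pi^*}_{y_0,w}(s)\in\mathcal{I}$ for every $s\geq 0$.

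For \cref{item:2}, I apply assumption $(H)$ successively on the $K$ sub-intervals composing the period starting at $i\Delta t$, which at step~$j$ gives $\delta^{\pi^*}_{\mu,\mathcal{W}}(j\Delta t) \geq \delta^{\pi^*}_{\mu,\mathcal{W}}((j-1)\Delta t)\, e^{\lambda^j\Delta t}$. Telescoping yields
\[
\delta^{\pi^*}_{\mu,\mathcal{W}}((i+K)\Delta t)\ \geq\ \delta^{\pi^*}_{\mu,\mathcal{W}}(i\Delta t)\, e^{(\lambda^{i+1}+\cdots+\lambda^{i+K})\Delta t}.
\]
On the other hand, $(*)$ implies the elementary geometric inequality $\delta^{\pi^*}_{\mu,\mathcal{W}}((i+K)\Delta t) + \|\tilde{Y}^{\pi^*}_{z_0}((i+K)\Delta t)-\tilde{Y}^{\pi^*}_{z_0}(i\Delta t)\| \leq \delta^{\pi^*}_{\mu,\mathcal{W}}(i\Delta t)$; since the iterated-$\pi$ Euler trajectory is in general not exactly periodic, the two centers differ and this last inequality is strict, which combined with the previous lower bound forces $\lambda^{i+1}+\cdots+\lambda^{i+K}<0$.

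For the final assertion, I compare two unperturbed trajectories $Y^{\pi^*}_{y_1,0}$ and $Y^{\pi^*}_{y_2,0}$ starting in $\mathcal{I}$; both remain in $\mathcal{I}$ by \cref{item:1} applied with $w=0$, and hence in the convex zone $\mathcal{T}_j$ on each sub-interval indexed by~$j$. A standard Grönwall argument on $\tfrac{d}{dt}\|y_1(t)-y_2(t)\|^2 \leq 2\lambda^j\|y_1(t)-y_2(t)\|^2$, derived from \cref{eq:H0}, gives a contraction by factor $e^{\lambda^j\Delta t}$ on the $j$-th sub-interval, and multiplying over one full period produces a strict contraction of ratio $e^{(\lambda^{i+1}+\cdots+\lambda^{i+K})\Delta t}<1$ by \cref{item:2}. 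Thus the period-$T$ map $y\mapsto Y^{\pi^*}_{y,0}(T)$ is a contraction of the complete metric space $\mathcal{B}_{\mathcal{W}}(i\Delta t)$ into itself (using $(*)$ again), so by Banach's fixed-point theorem it admits a unique fixed point $y^\star$, whose $\pi^*$-orbit is the announced limit cycle $\mathcal{L}\subset\mathcal{I}$, and every unperturbed trajectory starting in $\mathcal{I}$ converges to $\mathcal{L}$ exponentially.

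The main technical obstacle is ensuring that the local rates $\lambda^j$, defined through \cref{eq:H0} on the convex zones $\mathcal{T}_j$ produced by the successor-ball procedure of \cref{rk:local}, legitimately apply to every pair of trajectories compared in the last step. This requires both trajectories to remain in $\mathcal{T}_j$ throughout the $j$-th sub-interval, which is delivered by the invariance of $\mathcal{I}$ from \cref{item:1} together with the built-in containment $\{y\in\mathcal{B}_{\mathcal{W}}(t)\}_{t\in[(j-1)\Delta t,j\Delta t]}\subseteq\mathcal{T}_j$; iteration over subsequent periods is handled by observing that $(*)$ yields $\mathcal{B}_{\mathcal{W}}((i+nK)\Delta t)\subset\mathcal{B}_{\mathcal{W}}(i\Delta t)$ for all $n\geq 1$, so the same zones, and hence the same rates, may be reused.
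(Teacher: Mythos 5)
Your argument is correct, and for the first two claims it is essentially the paper's own: \cref{item:1} is the robust invariance of the tube combined with $(*)$ and the $T$-periodicity of $\pi^*$ (the paper states it ``follows easily from $(*)$''), and \cref{item:2} rests on exactly the same two facts as the paper's \emph{ad absurdum} step, namely the telescoped inequality $\delta_{\mu,{\cal W}}((i+K)\Delta t)\geq e^{(\lambda^{i+1}+\cdots+\lambda^{i+K})\Delta t}\,\delta_{\mu,{\cal W}}(i\Delta t)$ from $(H)$ and the fact that $(*)$ forces the later radius to be strictly smaller. One cosmetic repair there: you justify strictness by saying the two centers ``in general'' differ, which is not guaranteed; you do not need it, since proper inclusion of a ball in a concentric ball already forces a strictly smaller radius, which is precisely the paper's contradiction. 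Where you genuinely depart from the paper is the last assertion. The paper invokes the contraction-region argument of Theorem~2 of \cite{ManchesterCDC13} (a compactness/Bolzano--Weierstrass-type argument extracting an accumulation point of the period-sampled trajectory and using mutual convergence of solutions to exhibit an attractive LC), whereas you make the period-$T$ map $y\mapsto Y^{\pi^*}_{y,{\bf 0}}(T)$ an explicit strict contraction of the closed ball ${\cal B}_{{\cal W}}(i\Delta t)$ into itself and apply Banach's fixed-point theorem. Your route buys more: a unique $T$-periodic orbit in the ball, exponential convergence to it, and the ``equilibrium as trivial LC'' case for free (the periodic orbit may degenerate to a point), at the price of having to check that the local rates $\lambda^j$ and zones ${\cal T}_j$ of \cref{rk:local} remain valid for \emph{both} compared trajectories over \emph{all} later periods. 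You handle this, though your shorthand that $(*)$ ``yields ${\cal B}_{{\cal W}}((i+nK)\Delta t)\subset{\cal B}_{{\cal W}}(i\Delta t)$'' is not literally immediate from $(*)$; the clean justification is the restart-by-periodicity argument you already used for \cref{item:1}: once the state at a phase-$i$ time lies in ${\cal B}_{{\cal W}}(i\Delta t)$, the same control, the same Euler center and the same zones ${\cal T}_{i+1},\dots,{\cal T}_{i+K}$ apply to the next period, so the first-period rates can indeed be reused. With these two small patches your proof is complete and somewhat sharper than the paper's sketch.
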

\begin{proof} (sketch). \cref{item:1} follows easily from $(*)$.
\cref{item:2} is proved {\em ad absurdum}: Suppose $\lambda^{i+1} +\cdots + \lambda^{i+K}  \geq 0$. It follows, using $(H)$:
$\delta_{\mu,{\cal W}}((i+K)\Delta t) \geq  
e^{(\lambda^{i+1} +\dots + \lambda^{i+K}) \Delta t}\delta_{\mu,{\cal W}}(i\Delta t)
\geq \delta_{\mu,{\cal W}}(i\Delta t)$. This implies that 
the radius of ${\cal B}_{{\cal W}}((i+K)\Delta t)$ 
is greater than or equal to
the radius of ${\cal B}_{{\cal W}}(i\Delta t)$, %
which contradicts $(*)$.
So $\lambda^{i+1} +\cdots + \lambda^{i+K} < 0$. It follows
that ${\cal I}$ is a ``contraction'' region for the unperturbed system,
and every solution  starting at $y_0\in{\cal I}$
converges to an LC 
${\cal L}\subset {\cal I}$ (cf. proof of Theorem~2 in \cite{ManchesterCDC13}).\footnote{Actually, the system may also converge to an equilibrium point, but it is convenient to consider an
equilibrium as a trivial form of LC (see \cite{ManchesterCDC13}).}
\end{proof}

From \cref{lemma:1}, it easily follows:

\begin{theorem}\label{th:3}
Let $y_0\in {\cal S}$ be a point of $\varepsilon$-representative $z_0\in{\cal X}$
(so $\|y_0-z_0\|\leq \varepsilon$).
Let $T=k\tau=K\Delta t$.
Let $\pi\in U^k$ be the optimal pattern output by $\mathit{PROC}_k^\varepsilon(z_0)$ for the unperturbed system with finite horizon $T$.
Let us consider the tube ${\cal B}_{{\cal W}}(t)\equiv B(\tilde{Y}^{\pi^*}_{z_0}(t),\delta^{\pi^*}_{\mu,{\cal W}}(t))$  for some $\mu\geq \varepsilon$.
Suppose that the following inclusion condition holds:
$$(*)\ \ \ \ \ {\cal B}_{{\cal W}}\big((i+K)\Delta t\big)\subset {\cal B}_{{\cal W}}(i\Delta t) \text{ for some } i\geq 0\text{.}$$
Then:
\begin{enumerate}
\item\label{item1}
The exact solution 
$Y_{y_0,{\bf 0}}^{\pi^*}(t)$
of the unperturbed system under control~$\pi^*$
 converges to an LC ${\cal L}\subset {\cal I}$ when $t\rightarrow\infty$.
 
\item\label{item2}
For all $w\in{\cal W}$, the exact solution $Y_{y_0,w}^{\pi^*}(t)$ of
the perturbed system always remains inside 
${\cal I}$ for $t\geq i\Delta t$.
\end{enumerate}
This reflects the robustness of the perturbed system under~$\pi^*$.

\end{theorem}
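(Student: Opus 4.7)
The plan is to derive \cref{th:3} as a direct corollary of \cref{lemma:1}, whose machinery already does the heavy lifting. The hypothesis $(*)$ is exactly the hypothesis of \cref{lemma:1}, so everything reduces to transporting a specific trajectory into the invariant set~${\cal I}\equiv\{y\in{\cal B}_{{\cal W}}(t)\}_{t\in[i\Delta t,(i+K)\Delta t]}$ and then invoking the two conclusions of the lemma.

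First I would argue that both the unperturbed and the perturbed exact trajectories starting at $y_0$ lie inside the tube for all $t\geq 0$. Since $\mu\geq\varepsilon$, we have $\|y_0-z_0\|\leq\varepsilon\leq\mu$, so \cref{prop:1bis} applied to the iterated pattern $\pi^*$ (extending $\delta^u_{\varepsilon,{\cal W}}$ continuously into $\delta^{\pi^*}_{\mu,{\cal W}}$, exactly as discussed just after \cref{prop:1bis}) yields
\begin{equation*}
Y^{\pi^*}_{y_0,w}(t)\in {\cal B}_{{\cal W}}(t) \qquad \text{for all } t\geq 0,\ w\in{\cal W},
\end{equation*}
and in particular $Y^{\pi^*}_{y_0,{\bf 0}}(t)\in{\cal B}_{{\cal W}}(t)$ as the unperturbed case.

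Next I would evaluate this inclusion at $t=i\Delta t$: the trajectory point $Y^{\pi^*}_{y_0,w}(i\Delta t)$ belongs to ${\cal B}_{{\cal W}}(i\Delta t)\subseteq{\cal I}$. I then invoke \cref{item:1} of \cref{lemma:1}, which states that ${\cal I}$ is invariant under the perturbed dynamics. Shifting the time origin to $i\Delta t$ and restarting the iterated pattern $\pi^*$ at the current state (the pattern is periodic of length $K\Delta t=k\tau$, which matches the period used to build~${\cal I}$), we obtain $Y^{\pi^*}_{y_0,w}(t)\in{\cal I}$ for all $t\geq i\Delta t$, giving \cref{item2}.

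For \cref{item1}, the unperturbed trajectory also enters ${\cal I}$ at $t=i\Delta t$ by the same argument (taking $w\equiv 0$). The final conclusion of \cref{lemma:1}, itself a consequence of its \cref{item:2} (the contraction inequality $\lambda^{i+1}+\dots+\lambda^{i+K}<0$), guarantees that every solution of the unperturbed system starting in ${\cal I}$ converges to a limit cycle ${\cal L}\subset{\cal I}$. Applying this to the orbit $\{Y^{\pi^*}_{y_0,{\bf 0}}(t)\}_{t\geq i\Delta t}$ yields convergence to~${\cal L}$ as $t\rightarrow\infty$. The only subtlety I expect here is a bookkeeping one: one must verify that extending $\pi$ into the infinite pattern $\pi^*$ is compatible with the construction of ${\cal I}$ (the periodicity matches the window of length $K\Delta t$ used in $(*)$), so that restarting from any point in ${\cal I}$ at time $i\Delta t$ indeed places the state back on the same invariant tube.
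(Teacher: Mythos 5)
Your proposal is correct and follows essentially the route the paper intends: the paper offers no separate proof of \cref{th:3}, stating only that it ``easily follows'' from \cref{lemma:1}, and your argument supplies exactly the implicit glue --- using \cref{prop:1bis} (with $\mu\geq\varepsilon$ and the iterated pattern $\pi^*$) to place the exact perturbed and unperturbed trajectories inside the tube, hence inside ${\cal I}$ at $t=i\Delta t$, and then invoking the invariance and limit-cycle conclusions of \cref{lemma:1}. The phase-matching ``bookkeeping'' point you flag is real but is at the same level of informality as the paper's own sketch, so no further comparison is needed.
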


\begin{remark}
In the OPC literature, it is classical to formulate the optimization
problem with an
explicit periodicity constraint of the form
$Y(T)=Y(0)$ (see, \eg{}~\cite{Gilbert77,HouskaCDC09}).
This is not needed here. Actually, at the end of the first period
$t=T$, $Y(t)$ is in general very different from $Y(0)$ with our method
(see \cref{ex:3}).

\end{remark}

\subsection*{Implementation}%
The implementation has been done in Python
and corresponds to a program of
around 500~lines.
The source code is available at
	\href{https://lipn.univ-paris13.fr/~jerray/robust/}{\nolinkurl{lipn.univ-paris13.fr/~jerray/robust/}}.
In the experiments below, the program runs on a 2.80 GHz Intel Core i7-4810MQ CPU with 8\,GiB of memory.

\begin{example}\label{ex:3}
Let us consider the system of \cref{ex:1} and the initial point $z_0\equiv(X(0),S(0),P(0))=(6.52,12.50,22.40)$ 
Let $\pi$ be the control sequence $\pi$ found 
by $\mathit{PROC}(z_0)$ for the process 
without perturbation for $\tau=1, \Delta t=1/400$  and $T= 48$ (\ie{} $k=48$,
$K=19200$). 
Here, $\lambda$ is independent of the value of the
mode $S_f$. The values of $\lambda$ and $\gamma$ are computed locally
and vary from  $+4.0$ to $-0.1$, and from $0.06$ to $0.12$ respectively.
\lf{LF: CPU time for computing {\em all} the points $z$ of ${\cal X}$???}
Let us apply the control sequence $\pi$  
repeatedly to the process {\em with} perturbation: we suppose here that
the perturbation is additive and $\|w\|\leq \omega=0.005$.
\cref{fig:biochemical-init-uncertainty-periodic-3} displays the results of
the 4 first applications of $\pi$. 
In these figures, the red curves
represent the Euler  approximation $\tilde{Y}(t)$ of the undisturbed solution as a function of time $t$ in the plans $X$, $S$ and $P$.
The green curves correspond, in the $X$, $S$ and $P$ plans, to the borders
of tube ${\cal B}_{{\cal W}}(t)\equiv B(\tilde{Y}(t),\delta_{{\cal W}}(t))$  
with $\tilde{Y}(0)=z_0$
and $\delta_{{\cal W}}(0)=\mu=1$.\footnote{It is clear that, as required by \cref{th:3}, $\mu=1>\varepsilon\approx 0.004$.}
The 10 blue curves correspond to
as many random simulations of the system with perturbation, with initial values in $B(z_0,\mu)$.
It can be seen that the blue curves always remain well inside the green tube 
${\cal B}_{{\cal W}}(t)$ which overapproximates the set of solutions
of the system with perturbation.
The values of the
coordinates of the center $\tilde{Y}(t)$ and  the radius
$\delta_{{\cal W}}(t)$ of the green tube ${\cal B}_{{\cal W}}(t)$,
at $t = 0, T, 2T, 3T$, are:

$\tilde{Y}(0)=(6.52,12.5,22.4)$, $\delta_{{\cal W}}(0)=1$;

$\tilde{Y}(T)=(6.78068367, 12.61279314, 23.98459177)$,
$\delta_{{\cal W}}(T)=0.35893$;

$\tilde{Y}(2T)=(6.77663937, 12.62347387, 23.95516391)$, 
$\delta_{{\cal W}}(2T)=0.2475$;

$\tilde{Y}(3T)=(6.77670354, 12.62331389, 23.95558776)$, 
$\delta_{{\cal W}}(3T)=0.24533$.\\
We have: ${\cal B}_{{\cal W}}(3T)\subset {\cal B}_{{\cal W}}(2T)\subset {\cal B}_{{\cal W}}(T)$  (but ${\cal B}_{{\cal W}}(T)\not\subseteq {\cal B}_{{\cal W}}(0)$).
The computation takes 480 seconds of CPU time.
It follows by \cref{th:3} that the solution of the perturbed system, 
for $t\geq 2T$ passes periodically by ${\cal B}_{{\cal W}}(2T)=B(\tilde{Y}(2T),0.2475)$,
and the solution of the unperturbed system converges to an LC
contained in ${\cal I}\equiv \{(X,S,P)\in {\cal B}_{{\cal W}}(t)\}_{t\in [T, 2T]}$.
This appears clearly on 
\cref{fig:biochemical-init-uncertainty-periodic-3}, where 
simulations of the process with perturbation corresponds to the blue lines,
and the process without perturbation to the red line.

\begin{figure}[h!]
\centering
\includegraphics[scale=0.37]{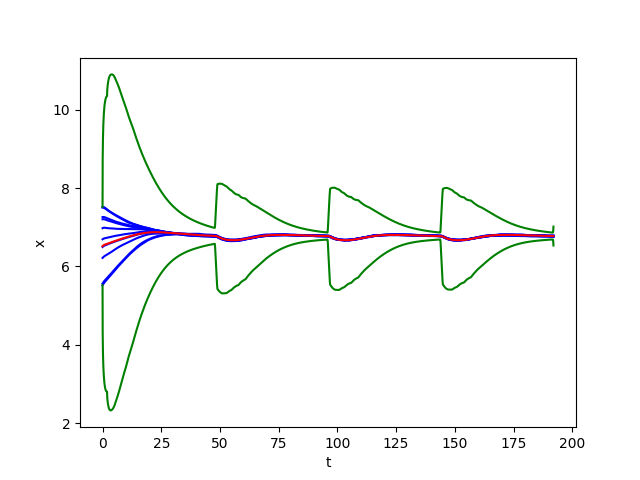}
\includegraphics[scale=0.37]{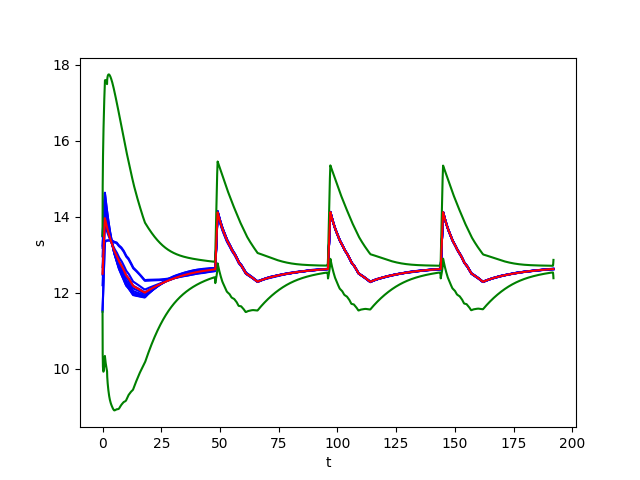}
\includegraphics[scale=0.37]{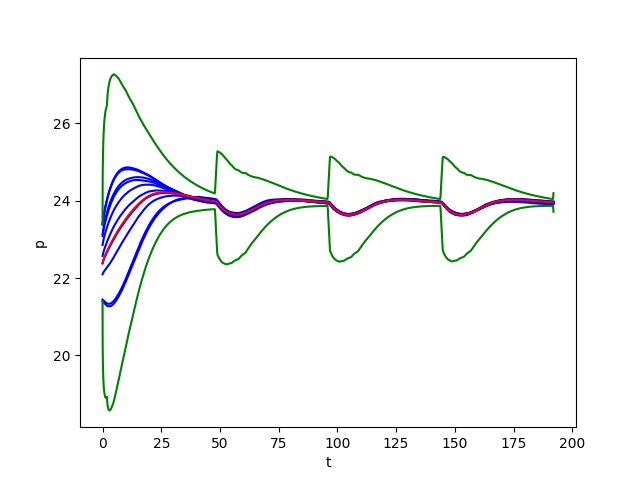}
\includegraphics[scale=0.37]{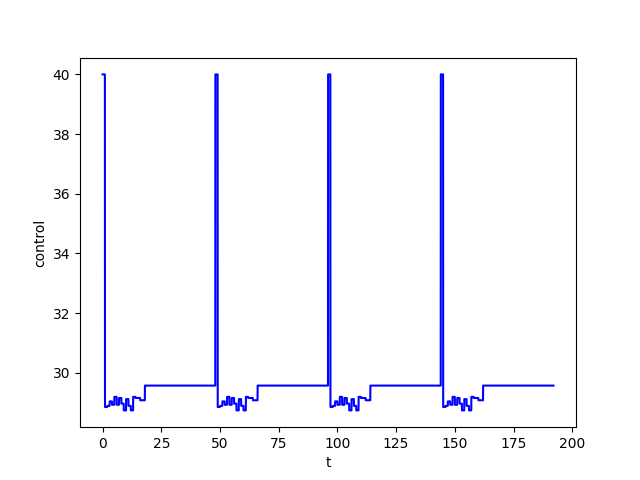}
\caption{Biochemical process with an additive perturbation $\|w\|\leq 0.005$ 
over 4 periods ($4T=192$) for $\Delta t=1/400$ and initial condition $(X(0), S(0), P(0)) = (6.52, 12.5, 22.4)$, with, from top to bottom, $X(t)$, $S(t)$, $P(t)$ and control $S_f(t)$.}
\label{fig:biochemical-init-uncertainty-periodic-3}
\end{figure}

\end{example}

\section{Conclusion}\label{sec:conclusion}

We have supposed here that a control sequence $\pi$ has been generated
for solving a finite-horizon optimal control problem for the system without perturbation ($w = 0$).
We have then given a simple condition which guarantees that,
under the repeated application $\pi^*$ of $\pi$,
the system with perturbation ($w\in{\cal W}$) is robust under $\pi^*$:
the unperturbed system is guaranteed to converge towards an LC~${\cal L}$, and
the system perturbed with ${\cal W}$ is guaranteed to stay inside a bounded tube around~${\cal L}$.
In contrast with many methods of OPC using elements of the theory of LCs
(\eg{} \cite{HouskaCDC09,SternbergACC12,SternbergIFAC12}),
the method does not make use of any
notion of monodromy matrix or Lyapunov function.
The method uses a simple algorithm 
to compute {\em local} rates of contraction in the framework of Euler's method (see \cref{rk:local}), which may be more accurate than the
global rates considered in the literature (see \eg{} 
\cite{AminzareS14,ManchesterCDC13}).
The simplicity of application of our method has been illustrated
on the example of a bioreactor given in~\cite{HouskaCDC09}.

As mentioned in \cref{sec:intro}, we have treated here a simplified problem 
of robustness compared to the one dealt with in 
\cite{HouskaCDC09} (cf.~\cite{NagyB03}).
We plan to improve our method in order to take into account 
the specification of state constraints during the evolution of the system.

\newcommand{\LNCS}{LNCS}

\ifdefined\VersionArXiv
	\renewcommand*{\bibfont}{\small}
	\printbibliography[title={References}]
\else
	\bibliographystyle{IEEEtran} %
	\bibliography{acc21}
\fi

\end{document}